\title{EXPSPACE-hardness of behavioural equivalences
\\
of succinct one-counter nets}
\author{Petr Jan\v{c}ar$^1$ \and Petr Osi\v{c}ka$^1$ 
\and Zden\v{e}k Sawa$^2$
\\
 \\
$^1${\small Dept of Comp. Sci., Faculty of Science, Palack\'y Univ. Olomouc, Czech Rep.}\\
{\small pj.jancar@gmail.com, osicka@acm.org}
\\
$^2${\small Dept of Comp. Sci., FEI, Techn.~Univ.~Ostrava, Czech Rep.}\\
{\small zdenek.sawa@vsb.cz}
}
\date{}
\begin{document}

\maketitle

\begin{abstract}
We note that the remarkable EXPSPACE-hardness result
	in [G\"oller, Haase, Ouaknine, Worrell, ICALP 2010] ([GHOW10]
	for short)
	allows us
 to answer an open complexity question for simulation preorder 
of succinct one counter nets (i.e., one counter automata 
with no zero tests where counter increments and decrements
	are integers written
	in binary). 
This problem, as well as bisimulation equivalence, turn out to be 
	EXPSPACE-complete. 
	
The technique of [GHOW10]
was referred to
by Hunter [RP 2015] for deriving EXPSPACE-hardness 
of reachability games on succinct one-counter nets.
We first give a~direct self-contained EXPSPACE-hardness proof 
for such reachability games  (by adjusting a known PSPACE-hardness
proof for emptiness of alternating finite automata with one-letter
	alphabet); then we reduce reachability games to (bi)simulation
	games by using a~standard ``defender-choice'' technique.
\end{abstract}	

\section{Introduction}

We concentrate on our contribution, without giving 
 a broader overview of the area here.

A remarkable result by G\"oller, Haase, Ouaknine,
Worrell~\cite{DBLP:conf/icalp/GollerHOW10} shows that model
checking a fixed CTL formula on succinct one-counter automata 
(where counter increments and decrements are integers written in binary)
is EXPSPACE-hard.
Their proof is interesting and nontrivial, and uses two involved
results from complexity theory.
The technique of this proof was (a bit vaguely)  
referred to by Hunter~\cite{DBLP:conf/rp/Hunter15},
by which he derived EXPSPACE-hardness of reachability games 
on succinct one-counter nets
(with no zero tests).

Simulation-like equivalences on (non-succinct) one-counter nets 
are PSPACE-complete
(see~\cite{DBLP:journals/corr/HofmanLMT16} for simulation equivalence
and~\cite{DBLP:journals/jcss/BohmGJ14} for bisimulation equivalence). This
immediately yields EXPSPACE-upper bounds in the ``succinct'' cases.

The PSPACE---EXPSPACE gap for the simulation problem was also
mentioned in~\cite{DBLP:journals/corr/HofmanLMT16} from where we quote:
``Another direction for further research is to establish the exact
complexity of strong/weak simulation for OCN with binary encoded
increments and decrements on the counter. Trivially, the PSPACE-lower
bound applies for this model and an EXPSPACE upper bound follows from
the results of this paper with the observation that these more
expressive nets can be unfolded into ordinary OCN with an exponential
blow-up.''

Here, in this paper, we close the complexity gap by showing EXPSPACE-hardness
(and thus EXPSPACE-completeness),
using a defender-choice technique 
(cf., e.g.,~\cite{DBLP:journals/jacm/JancarS08}) 
to reduce reachability games to any relation between
simulation preorder and bisimulation equivalence.

But we first present a direct proof
of EXPSPACE-hardness of reachability games. 
(This makes our paper self-contained
and shows that we do not need to rely on the result 
from~\cite{DBLP:conf/icalp/GollerHOW10}.) 
Our  direct proof
is based on the technique
from~\cite{DBLP:journals/ipl/JancarS07} used there to show
PSPACE-hardness
for emptiness of alternating finite automata with one-letter
alphabet (thus giving an alternative proof for the result by
Holzer~\cite{DBLP:conf/dlt/Holzer95});
Srba~\cite{DBLP:journals/corr/abs-0901-2068} used this result 
to show PSPACE-hardness of behavioural relations for 
(non-succinct) one-counter nets.
In Section~\ref{sec:addrem} we discuss a relation of our proof to 
the countdown games of~\cite{DBLP:journals/lmcs/JurdzinskiSL08} and
their use in~\cite{DBLP:journals/ipl/Kiefer13}.
While the countdown games can serve as an interesting EXPTIME-complete problem,
by a slight enhancement we get an EXPSPACE-complete problem.

We stress that we do not provide an alternative proof for 
the EXPSPACE-hardness result in~\cite{DBLP:conf/icalp/GollerHOW10};
the result in~\cite{DBLP:conf/icalp/GollerHOW10} is stronger, 
though technically 
it does not induce the hardness results for reachability and
(bi)simulation games automatically. 

In Section~\ref{sec:overview} we give an informal overview 
of the ideas in our paper.
Section~\ref{sec:definitions} contains formal definitions,
Section~\ref{sec:reachgames} proves the EXPSPACE-hardness of
reachability games, and Section~\ref{sec:reachtosimul} reduces
reachability games to (bi)simulation games.
Section~\ref{sec:addrem} contains some additional remarks.

\section{Informal overview}\label{sec:overview}

The mentioned strong result in~\cite{DBLP:conf/icalp/GollerHOW10}
shows that for any fixed language $L$ in EXPSPACE we can for any word $w$
(in the alphabet of $L$) construct 
a succinct one-counter automaton that performs a computation which
is accepting iff $w\in L$.
Such a computation needs to access 
concrete bits in the (reversed) binary presentation of the counter value.
A straightforward direct access to such bits is destructive (the counter
value is lost after the bit is read) but this can be avoided:
instead of a ``destructive
reading'' the computation just ``guesses'' 
the respective bits, and it is forced to guess correctly by a carefully
constructed CTL formula that is required to be satisfied by the
computation.

If we imagine that there is an opponent who can challenge the guesses
about bits, and after a challenge a destructive test follows that
either confirms the guess or exposes its invalidity, then this readily leads
to the hardness results for reachability games, and then also for
behavioural relations by using a defender-choice technique (with two
synchronously evolving copies of the respective one-counter automaton).

Performing the above sketched procedure to prove the mentioned results
rigorously would require
recalling technical details from~\cite{DBLP:conf/icalp/GollerHOW10}.
Instead we give a direct self-contained proof, which also makes clear
that our results do not rely on the involved complexity results
used in~\cite{DBLP:conf/icalp/GollerHOW10}.

\subsubsection*{EXPSPACE-hardness of reachability games} 
We use a~``master'' reduction. We thus fix an arbitrary  language
$L$ in EXPSPACE, decided by a Turing machine $M$ in space $2^{p(n)}$
for a fixed polynomial $p$.
For any word $w$
in the alphabet of $L$ there is the respective computation of $M$,
which
is
accepting iff $w\in L$; the computation is a sequence
$C_0,C_1,\ldots,C_t$ of configurations, each $C_i$ being a string of
length $m=2^{p(\abs{w})}$. 

Any
$k\in\Nat$ can code the $j$-th position in $C_i$
where $i=k\div m$ ($\div$ is integer division) and $j=k\bmod m$ (assuming $i\leq
t$).
Given $w$, we can construct an alternating one-counter automaton,
with Eve's and Adam's control states one of which is Eve's winning
state; there are no zero-tests but
 transitions yielding negative counter values are not allowed.
Starting in the initial configuration $(p_0,0)$ (with zero in the
counter), Eve 
(who claims that there is an accepting computation
$C_0,C_1,\ldots,C_t$)
keeps
incrementing the counter until she enters a configuration
$(\langle q_+,a\rangle,k)$
by which she claims that the position coded by $k$ 
is the head-position in the accepting configuration $C_t$ and
contains letter $a$
($q_+$ being the accepting state of $M$).

Eve then subtracts $m$ and enters a control state corresponding to a triple of symbols that she claims 
to be the $(j{-}1)$-th, the $j$-th, and the $(j{+}1)$-th symbol in
$C_{t-1}$ (where $j=k\bmod m$); the triple must be consistent with
the current symbol $\langle q_+,a\rangle$ and the transition rules of $M$.
We note that at least one symbol (of the type $b$ or
$\langle q,b\rangle$) in the triple
must be incorrect when 
 $\langle q_+,a\rangle$ is incorrect (i.e., when it is 
 not really the $j$-th symbol in $C_t$).

Now Adam chooses a new current symbol from the triple,
and also adds ${-}1/0/{+}1$ to the counter accordingly.
Eve then presents
 another consistent triple, etc.

In fact, Eve can also present a pair instead of a triple, claiming
that $j=0$ or that $j=m{-}1$ (where $j=k\bmod m$). Adam can challenge
this, claiming that $j>0$ or that $j<m{-}1$;
similarly he can claim that $j=0$ or that $j=m{-}1$ when Eve provides
a triple. Such claims can be easily (destructively) verified:
We first let Eve decrement the counter by $m$ repeatedly. If she
leaves a number $k'\geq m$ in the counter, then Adam 
uses a transition subtracting $m$ that
enters a state precluding Eve's win.
Hence Eve is rather subtracting $m$ until 
 $j=k\bmod m$ is in the counter.
Similarly we implement the respective checks
of the claims $j=0$, $j=m{-}1$, $0<j<m{-}1$, so that Eve can force her
win precisely
when her claim was correct.

Finally, Eve can claim that $C_0$ has been reached (i.e., $k\div
m=0$), which can be again punished by Adam if not true.
In the case $k\div m=0$ Eve wins
if the control
state ``claims'' the $j$-th symbol of $C_0$ (for $w$),
which is the blank tape-symbol if $j\geq |w|$;
checking this 
 condition can be again
easily implemented in the game.

Hence $w\in L$ iff Eve can force reaching her winning
control state (when starting in $(p_0,0)$).

\subsubsection*{Reachability game reduces to (bi)simulation game}

Given a (succinct) one counter automaton with Eve's and Adam's control
states, one of them being Eve's winning, we first label each transition
by its unique action (action name) and 
take two copies of the resulting automaton 
(control state $s$ in one copy has a counterpart $s'$ in the other); 
in the first copy we add a
special (``winning'') action in Eve's winning state.

We let 
two players, called Attacker and Defender, to mimic the
reachability game. If there is Eve's turn, Attacker performs 
a transition in one copy, and Defender must do the same in the other
copy, being obliged to use the same action as Attacker (cf.
Fig~\ref{fig:attchoice}).

The defender-choice technique is used when there is Adam's turn. 
To this aim the two automaton-copies are a bit enhanced 
and interconnected
(as in Fig.~\ref{fig:defendchoice}, discussed later). By performing a
``choice-action'' $a_c$ Attacker lets Defender to choose from (more
than one)
transitions labelled with $a_c$; if Attacker 
does not follow Defender's choice in
the next round, then Defender installs syntactic equality
(in which case the play continues from a pair of 
the same configurations of the same automaton-copy).
It is thus Defender who chooses Adam's moves.

By the above construction we achieve that if Eve has a winning
strategy in the reachability game from $(p_0,0)$, then 
$(p_0,0)$ is not simulated by $(p'_0,0)$, and if 
Eve has no winning
strategy, then $(p_0,0)$ and $(p'_0,0)$ are bisimilar
(in the respective labelled transition system).

\begin{figure}%[t]
  \centering
  \begin{tikzpicture}[node distance=0.8cm]

    \node (dummy) {};
    \node[stav,left of=dummy,xshift=0cm] (s2) {$s_2$};
    \node[stav,right of=dummy,xshift=0cm] (s3) {$s_3$};
    \node[right of=dummy,xshift=4cm]  (dummy2) {};
    \node[stav,left of=dummy2] (s2prime) {$s'_2$};
    \node[stav,right of=dummy2] (s3prime) {$s'_3$};

  \node[stav,above of=dummy,yshift=1.5cm] (s1) {$s_1$}
  edge[pil] node[left] {$a^1_2(x)$} (s2.north)
  edge[pil] node[right] {$a^1_3(y)$} (s3.north);

  \node[stav,above of=dummy2,yshift=1.5cm] (s1prime) {$s'_1$}
  edge[pil] node[left] {$a^1_2(x)$} (s2prime.north)
  edge[pil] node[right] {$a^1_3(y)$} (s3prime.north);

  \node[stav,left of=s2,xshift=-5cm] (s2orig) {$s_2$};

  \node[stav,left of=s3,xshift=-5cm] (s3orig) {$s_3$};
  \node[stav,left of=s1,xshift=-5cm] (s1orig) {$s_1$}
  edge[pil] node[left] {$(x)$} (s2orig.north)
  edge[pil] node[right] {$(y)$} (s3orig.north);

  \node [left of=s1orig,xshift=0.2cm,yshift=0.2cm] (dummy2) {$E$};

  \node [right of=s1orig,xshift=2.3cm,yshift=0.5cm] (dummy3) {};
  \node [below of=dummy3,yshift=-2.8cm] (dummy4) {};

  \draw [thick] (dummy3) -- (dummy4) ;

 \end{tikzpicture}

	\caption{In $(s_1,s'_1)$ it is Attacker
	who chooses $(s_2,s'_2)$ or
	$(s_3,s'_3)$.
}\label{fig:attchoice}.
\end{figure}
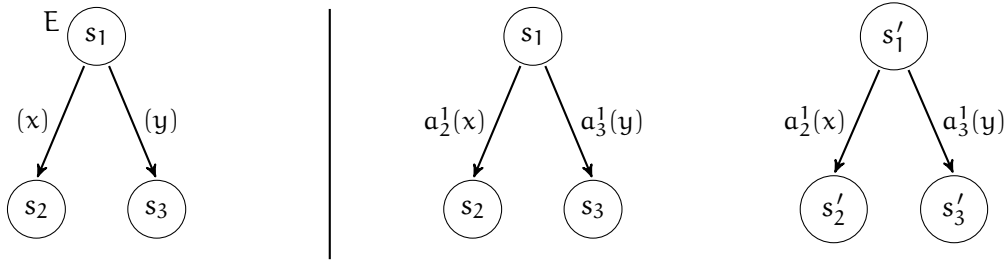

\begin{figure}%[t]
  \centering
  \begin{tikzpicture}[node distance=0.8cm]

    \node (dummy) {};
    \node[stav,left of=dummy,xshift=-0.4cm] (s2) {$s_2$};
    \node[stav,right of=dummy,xshift=0.4cm] (s3) {$s_3$};
   \node[stav,right of=s3,xshift=1.5cm] (s2prime) {$s'_2$};
\node[stav,right of=s2prime,xshift=1.5cm] (s3prime) {$s'_3$};

   \node[stav,above of=dummy,yshift=1.5cm] (s123) {$s^1_{23}$}
   edge[pil] node[left] {$a^1_2(x)$} (s2.north)
   edge[pil] node[left] {$a^1_3(y)$} (s3.north);

  \node[stav,above of=s2prime,yshift=1.5cm] (s12) {$s^1_{2}$}
  edge[pil] node[right] {$a^1_2(x'')$} (s2prime.north)
  edge[pil,bend right=25] node[right,xshift=-0.1cm,yshift=-0.1cm]
  {$a^1_3(y{-}x')$} (s3.north);

   \node[stav,above of=s3prime,yshift=1.5cm] (s13) {$s^1_{3}$}
   edge[pil, ] node[right] {$a^1_3(y'')$} (s3prime.north)
   edge[pil, out=-100, in=-60] node[right,yshift=-0.2cm]
   {$a^1_2(x{-}y')$} (s2.south);
 
  \node[stav,above of=s123,yshift=1.5cm] (s1) {$s_1$}
  edge[pil] node[left] {$a_c(0)$} (s123.north)
  edge[dsh, bend right=10] node[right,xshift=-0.4cm,yshift=0.5cm]
  {$a_c(x')$} (s12.west)
    edge[dsh, bend left=30] node[right,xshift=-0.9cm,yshift=0.5cm]
    {$a_c(y')$} (s13.west)
   ;

  \node[stav,above of=s12,yshift=1.5cm,xshift=1cm] (s1prime) {$s'_1$}
  edge[pil] node[left,xshift=-0cm,yshift=-0.1cm] {$a_c(x')$} (s12.north)
  edge[pil] node[right] {$a_c(y')$} (s13.north)
 ;

  \node[stav,left of=s2,xshift=-5cm] (s2orig) {$s_2$};

  \node[stav,left of=s3,xshift=-5cm] (s3orig) {$s_3$};
  \node[stav,left of=s1,xshift=-5cm] (s1orig) {$s_1$}
  edge[pil] node[left] {$(x)$} (s2orig.north)
  edge[pil] node[right] {$(y)$} (s3orig.north);

  \node [left of=s1orig,xshift=0.2cm,yshift=0.2cm] (dummy2) {$A$};

  \node [right of=s1orig,xshift=2.3cm,yshift=0.5cm] (dummy3) {};
  \node [below of=dummy3,yshift=-6.0cm] (dummy4) {};

  \draw [thick] (dummy3) -- (dummy4) ;

 \end{tikzpicture}

	\caption{In $(s_1,s'_1)$ it is Defender
	who chooses $(s_2,s'_2)$ or
	$(s_3,s'_3)$ (or a pair of equal states); to take 
	the counter-changes into account correctly, we put
	$x'=\min{\{x,0\}}$, $x''=\max{\{x,0\}}$, and
 $y'=\min{\{y,0\}}$, $y''=\max{\{y,0\}}$ 
 (hence $x=x'{+}x''$ and $y=y'{+}y''$).}\label{fig:defendchoice}.

\end{figure}
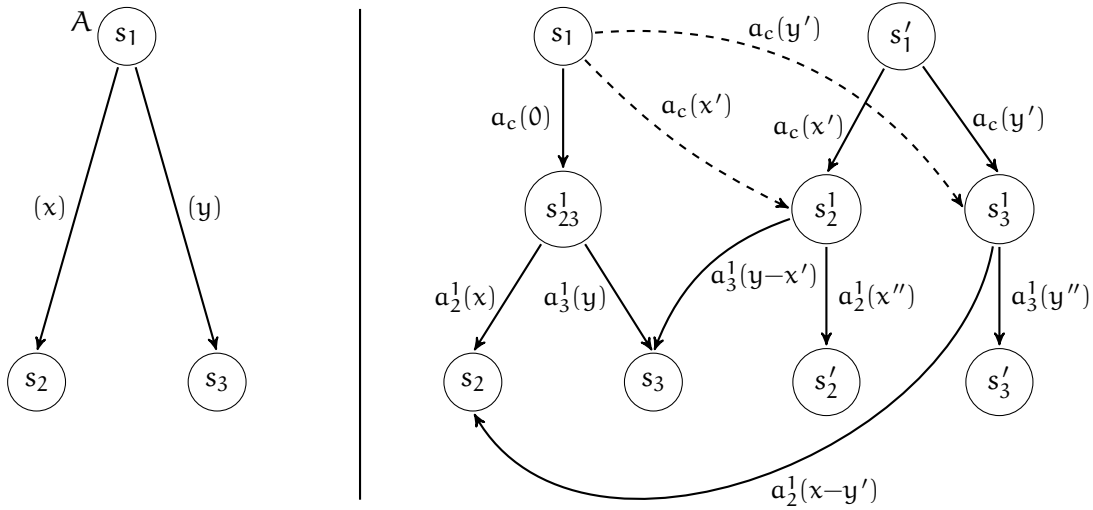

\section{Definitions}\label{sec:definitions}

By $\Zset$ we denote the set of integers,
and by $\Nat$ the set of nonnegative integers  
$\{0,1,2,\dots\}$.
\\
We use $[i,j]$ for denoting the set $\{i,i{+}1,\dots,j\}$, where $i,j\in\Zset$.

\paragraph{Reachability games.}
By a \emph{reachability game}, or an \emph{r-game} for short, we mean
a tuple
$\calG=(V,V_\exists, \gt{}, \calT)$,
where $V$ is the set of \emph{states} (or \emph{vertices}), 
$V_\exists\subseteq V$ is the set of \emph{Eve's states},
$\mathop{\gt{}}\subseteq {V\times V}$ is the \emph{transition relation}
(or the set of \emph{transitions}), and $\calT\subseteq V$ is the set
of \emph{target states}.
By  \emph{Adam's states} we mean  the elements of
$V_\forall=V\smallsetminus V_\exists$.

We put $\winareaE=\bigcup_{\lambda\in Ord}W_\lambda$ where $Ord$ is
the class of ordinals and the sets $W_\lambda\subseteq V$ are defined
inductively as follows. We put $W_0=\calT$. For $\lambda>0$ we put
$W_{<\lambda}=\bigcup_{\lambda'<\lambda}W_{\lambda'}$, and we
stipulate:
\begin{enumerate}[a)]
	\item
if $s\not\in W_{<\lambda}$, $s\in V_\exists$, and 
$s\gt{}\bar{s}$ for some $\bar{s}\in W_{<\lambda}$, then $s\in W_{\lambda}$;
	\item
if $s\not\in W_{<\lambda}$, $s\in V_\forall$, 
and we have $\emptyset\neq \{\bar{s}\mid s\gt{}\bar{s}\}\subseteq
W_{<\lambda}$,
then $s\in W_{\lambda}$.
\end{enumerate}		
(If a) applies, then $\lambda$ is surely a successor ordinal,
otherwise $\lambda$ can be also a limit ordinal.)

For each $s\in\winareaE$, by $\rank(s)$ we denote (the unique) 
$\lambda$ such that
$s\in W_\lambda$. A transition $s\gt{}\bar{s}$ is
\emph{rank-reducing} if $\rank({s})>\rank(\bar{s})$.
We note that for any $s\in\winareaE$ with $\rank(s)>0$ we have:
if $s\in V_{\exists}$, then there is at least one rank-reducing
transition $s\gt{}\bar{s}$ (in, fact $\rank(s)=\rank(\bar{s}){+}1$ in
this case); if $s\in V_{\forall}$, then there is at least one
transition $s\gt{}\bar{s}$ and all such transitions are rank-reducing.

\medskip

\emph{Remark.}
We are primarily interested in the games that have  (at most)
countably many states and are finitely branching (the sets 
$\{\bar{s}\mid s\gt{}\bar{s}\}$ are finite for all $s$). In such cases 
we have $\rank(s)\in\Nat$ for each $s\in\winareaE$.
\\
We also note that $\winareaE$ is the set of states from which Eve has a winning
strategy, i.e. such that guarantees reaching (some state in) $\calT$ when Eve
is choosing a next transition in Eve's states 
and Adam is choosing a next transition in Adam's states.

\paragraph{Labelled transition systems and (bi)simulations.}
A \emph{labelled transition system}, an \emph{LTS} for short, is a
tuple
$\calL = (S, Act, (\gt{a})_{a\in Act})$
  where $S$ is the set of \emph{states}, $Act$ is the set of
  \emph{actions},
  and $\mathop{\gt{a}}\subseteq S \times S$ is the set
of  \emph{$a$-transitions} (transitions labelled with $a$), for each
  $a\in Act$.

Given $\calL = (S, Act, (\gt{a})_{a\in Act})$,
a relation $R \subseteq S \times S$ is a \emph{simulation}
  if for every $(s,s')\in R$ and every
  $s \gta t$ there is $s' \gta t'$ such that $(t,t')\in R$;
  if, moreover, for every $(s,s')\in R$ and every
  $s' \gta t'$ there is $s \gta t$ such that $(t,t')\in R$, then $R$
is a \emph{bisimulation}.
  The union of all simulations (on $\calS$) is the maximal
  simulation, denoted $\simul$; it is a preorder, called
  \emph{simulation preorder}.
 The union of all bisimulations is the maximal
 bisimulation, denoted $\bisim$; it is an equivalence, called
 \emph{bisimulation equivalence} (or \emph{bisimilarity}). 
 We observe that
 $\bisim\subseteq\simul$.

 \medskip

 \emph{Remark.} 
We can write
 $s_1\simul s_2$ or $s_1\bisim s_2$ also for states $s_1$,
 $s_2$ from different  LTSs $\calL_1$, $\calL_2$, in which case
 the LTS arising by the disjoint union of $\calL_1$ and $\calL_2$ is
 (implicitly) referred to.
\\
It is also useful to think in terms of games here.
In the \emph{simulation game}, in a (current) pair $(s,s')$ Attacker chooses 
a transition $s\gt{a}t$ and Defender responds with some
$s'\gt{a}t'$ (for the action $a$ chosen by Attacker); the play then
continues with another round, now in the current pair  $(t,t')$, etc.
If Defender has no response in a round, then Attacker wins the play.
It is standard to note that $s\notsimul s'$ iff Attacker has a winning
strategy from $(s,s')$.
\\
The case of \emph{bisimulation game} is analogous, but 
in any round starting from  $(s,s')$
Attacker can choose to
play  $s\gt{a}t$ or $s'\gt{a}t'$, and Defender has to respond
with some $s'\gt{a}t'$ or $s\gt{a}t$, respectively.
Now Attacker has a winning
strategy from $(s,s')$ iff $s\notbisim s'$.

\medskip

We now define specific r-games and LTSs, presented by particular
one-counter automata.

\paragraph{(Succinct) one-counter net games.}
By a \emph{one-counter net game}, an \emph{ocn-game} for short, we
mean a tuple $\calN=(Q,Q_{\exists},\delta,p_{win})$
where $Q$ is the finite set of \emph{(control) states},
$Q_{\exists}\subseteq Q$ is the set of \emph{Eve's (control) states}, 
$p_{win}\in Q$ is the \emph{target (control) state}, and 
$\delta\subseteq Q\times\Zset\times Q$ is the finite set of
\emph{(transition) rules}.
We often present a rule $(q,z,q')\in\delta$ as
$q\gt{z}q'$.
By \emph{Adam's (control) states} we   
mean the elements of $Q_{\forall}=Q\smallsetminus Q_{\exists}$.

An ocn-game $\calN=(Q,Q_{\exists},\delta,p_{win})$ has the
\emph{associated r-game} 
\begin{equation}\label{eq:gamefornet}
\calG_\calN=(Q\times
\Nat,Q_{\exists}\times\Nat,\gt{},\{p_{win}\}\times\Nat)
\end{equation}
where $(q,m)\gt{}(q',n)$ iff $q\gtl{n-m}q'$ is a rule (in $\delta$).
We often write $q(m)$ instead of $(q,m)$ for states of $\calG_\calN$.
(A rule  $q\gt{z}q'$ thus induces transitions $q(m)\gt{}q'(m{+}z)$ for
all $m\geq \max\{0,{-}z\}$.)

\medskip

We define the problem \emph{RG-SOCN} (reachability game on succinct
one-counter nets):
\begin{quote}
\emph{Instance}: an ocn-game $\calN$
with integers $z$ in rules $q\gt{z}q'$ written in binary, 
\\
\hspace*{4.5em}and a control state $p_0$.
\\
\emph{Question}: Is $p_0(0)\in\winareaE$ in the game $\calG_\calN$~?
\end{quote}

\medskip

\emph{Remark.}
We have defined the target states (in $\calG_\calN$) by the control
state $p_{win}$. There are other natural variants
(e.g., one in~\cite{DBLP:conf/rp/Hunter15} defines the target set
$\{p(0)\mid p\neq p_0\}$) that are, in principle, equivalent in our
context.

\paragraph{(Succinct) labelled one-counter nets.}
A \emph{labelled one-counter net}, an \emph{OCN} for short,
is a triple $\calN=(Q,Act,\delta)$,
where $Q$ is the finite set of \emph{control states}, $Act$ the finite
set of \emph{actions}, 
and $\delta \subseteq Q \times Act\times \Zset \times Q$ is 
the finite set of (\emph{labelled transition}) \emph{rules}.
We present a rule $(q,a,z,q')\in\delta$ as
$q\gtl{a,z}q'$ ($a\in Act$, $z\in\Zset$).
An OCN $\calN=(Q,Act,\delta)$  has the
\emph{associated LTS} 
\begin{equation}\label{eq:ltsfornet}
\calL_\calN=(Q\times
\Nat,Act,(\gt{a})_{a\in Act})
\end{equation}
where $q(m)\gtl{a}q'(n)$ iff $q\gtl{a,n-m}q'$ is a rule in $\delta$.
(We again write $q(m)$ instead of $(q,m)$.
A rule  $q\gtl{a,z}q'$ thus induces transitions $q(m)\gt{a}q'(m{+}z)$ for
all $m\geq \max\{0,{-}z\}$.)

For claims on complexity (in particular for
Theorem~\ref{th:equivhard}) we define 
a \emph{succinct (labelled) one-counter net}, 
a \emph{SOCN} for short, as an OCN where the integers in 
rules $q\gtl{a,z}q'$ are written in binary.

\section{EXPSPACE-hardness of reachability games}\label{sec:reachgames}

In Section~\ref{sec:overview} we sketched a ``master'' reduction
showing EXPSPACE-hardness of reachability games on succinct
one-counter nets, which is captured by the following theorem. 

\begin{theorem}\label{th:rgexpsphard}
The problem \rgsocn is EXPSPACE-hard.
\end{theorem}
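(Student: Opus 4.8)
\medskip\noindent\emph{Proof plan.}
The plan is to formalise the ``master'' reduction sketched in Section~\ref{sec:overview} (which is the classical technique of encoding space-bounded Turing computations into one-letter/counter reachability, scaled from one to two exponentials by the binary encoding). Fix an EXPSPACE language $L$ and a deterministic Turing machine $M$ deciding $L$ in space $2^{p(n)}$. Given $w$ with $n=\abs{w}$, put $m=2^{p(n)}$ and regard the computation of $M$ on $w$ as a sequence $C_0,C_1,\dots,C_t$ of configurations, each a string of length $m$ over a fixed alphabet $\Gamma\cup(\Gamma\times Q)$ (a letter, or a composite symbol marking the head cell and state); for $i>t$ we let $C_i=C_t$, so $w\in L$ iff some $C_i$ carries the accepting composite symbol, with $t$ finite (at most the doubly-exponential number of configurations, since a decider does not loop). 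A pair (configuration index $i$, tape position $j$) is encoded by the single number $k=i\cdot m+j$, so that $i=k\div m$ and $j=k\bmod m$; this $k$ is what the counter holds. The one structural fact used is \emph{locality}: the symbol at position $j$ of $C_{i+1}$ is a fixed function $\mathsf{next}$ of the three symbols at positions $j{-}1,j,j{+}1$ of $C_i$, with the obvious variant (a ``pair'' instead of a ``triple'') at the tape ends $j=0$ and $j=m{-}1$.

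From $w$ I build, in time polynomial in $n$, an ocn-game $\calN$ and a control state $p_0$; all integers in its rules lie in $\{-1,0,+1\}$, or are $-m$, $-(m{-}1)$, or constants $\le n$, hence have $O(p(n))$ bits, and this is exactly where succinctness is essential. From $p_0(0)$ Eve first increments the counter (rule $p_0\gt{+1}p_0$) and at some moment enters a control state claiming that position $j=k\bmod m$ of the configuration of index $i=k\div m$ is the head cell of an accepting configuration, carrying a composite symbol $\langle q_+,a\rangle$. Thereafter the play runs in rounds: from Eve's state with current claimed symbol $\sigma$ for an index $i\ge 1$, Eve subtracts $m$ and moves to an Adam state encoding a triple (or, if she additionally claims $j=0$ or $j=m{-}1$, a pair) which the definition of the rule set requires to be $\mathsf{next}$-consistent with $\sigma$; Adam then picks one component $\sigma_d$ ($d\in\{-1,0,+1\}$), adds $d$ to the counter, and the round repeats with claimed symbol $\sigma_d$ (so the counter stays equal to $\mathrm{index}\cdot m+\mathrm{position}$). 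Each of Eve's side-claims --- $j=0$, $j=m{-}1$, $0<j<m{-}1$, and ``index $=0$'' --- can be challenged by Adam, whereupon the play enters a short \emph{destructive} sub-game that ends it: a claim about $j=k\bmod m$ is checked by first letting Eve repeatedly subtract $m$ (if she ever leaves a value $\ge m$, Adam subtracts $m$ once more into an Eve-losing dead end, forcing her down to exactly $j$) and then testing that value against $0$, against $m{-}1$ (subtract $m{-}1$ first), or for being in $(0,m{-}1)$, each a one- or two-rule gadget; ``index $=0$'' is checked by offering Adam a rule $\gt{-m}$ into an Eve-losing dead end, legal exactly when the value is $\ge m$. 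Finally, when Eve (correctly) claims index $0$, the play compares her claimed symbol with the genuine $j$-th symbol of $C_0$, which is determined by $w$ (a composite start symbol or a letter $w_j$ at the $O(n)$ small positions, and the blank symbol for all $j\ge\abs{w}$, the last case being the range check ``$j\ge\abs{w}$''), and Eve reaches $p_{win}$ iff the comparison succeeds.

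Correctness, i.e.\ $w\in L \iff p_0(0)\in\winareaE$, I would prove by exhibiting strategies. If $w\in L$, Eve plays \emph{honestly}: she anchors at the true head cell of the true $C_t$ and in each backward round presents the true window of the true predecessor; then every side-claim she makes is true, so every Adam challenge loses, and the final comparison against $C_0$ succeeds. For the converse, fix any Eve strategy and give Adam an \emph{error-following} strategy maintaining the invariant that the current claimed symbol for (index $i$, position $j$) differs from the symbol at position $j$ of the genuine $i$-th configuration of $M$ on $w$: the invariant holds at the anchor because no genuine configuration is accepting while Eve's symbol claims acceptance, and it is preserved because $\mathsf{next}$ is a \emph{function}, so any window $\mathsf{next}$-consistent with an \emph{incorrect} symbol must be incorrect in at least one component, which Adam selects (adjusting the counter accordingly); when the play reaches index $0$ the invariant makes the final comparison fail, while any false side-claim is exposed by Adam in the corresponding destructive sub-game and an illegal move leaves Eve stuck. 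Since the construction is clearly polynomial-time computable, EXPSPACE-hardness of \rgsocn follows. I expect the main work to be not any single deep step but the uniform bookkeeping that makes the error-following invariant go through across the tape-boundary (pair) windows, the first-configuration comparison, and the forced-move/dead-end conventions --- that is, checking that \emph{every} way for Eve to deviate from honest play is punishable by Adam.
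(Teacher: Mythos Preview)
Your proposal is correct and follows essentially the same master reduction as the paper: the same encoding $k=i\cdot m+j$, the same backward verification with Eve presenting $\vdash$-consistent triples (or boundary pairs) and Adam selecting a component, the same destructive $\bmod\,m$ checks for boundary challenges, and the same comparison against $C_0$ at index~$0$. The paper organises the argument slightly differently---it first analyses an abstract infinite-state game $\calG^M_w$ (Proposition~\ref{prop:abstractgame}) and only then implements it as the ocn-game $\calN^M_{w,m}$, and it records the reduction as logspace rather than polynomial time---but these are presentational choices; your explicit honest/error-following strategies are exactly the content of the paper's inductive characterisation of $\winareaE$.
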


As already mentioned, an idea of a proof is given 
in~\cite{DBLP:conf/rp/Hunter15} by referring
to~\cite{DBLP:conf/icalp/GollerHOW10}. Here we provide a
self-contained proof, by performing the sketched master reduction  in
detail.

We first give a construction for general Turing machines,
and then implement its space-bounded variant by one-counter nets.
Hence we now fix an arbitrary (deterministic) 
Turing machine~$M=(Q,\Sigma,\Gamma,\delta,q_0,\{q_+,q_-\})$,
where $Q$ is the set of (control) states, $q_0\in Q$ the initial
state,  $q_+\in Q$ the accepting
state,  $q_-\in Q$ the rejecting
state, 
$\Sigma$ the input
alphabet, $\Gamma\supseteq \Sigma$ the tape alphabet, satisfying 
$\blank\in\Gamma\smallsetminus\Sigma$ for the special blank
tape-symbol $\blank$, and $\delta: (Q\smallsetminus
\{q_+,q_-\})\times\Gamma\rightarrow
Q\times\Gamma\times\{{-}1,{+}1\}$ is the transition function.

Putting $\Delta=\Gamma\cup(Q\times \Gamma)$, we define the relation
$\vdash\mathop{\subseteq} {\Delta^3\times \Delta}$ 
in a standard way:
\\
$(\beta_1,\beta_2,\beta_3)\vdash \beta$ if $\beta_i\in Q\times\Gamma$
for at most one $i\in\{1,2,3\}$ and the following conditions
hold:
\begin{itemize}
\item if $\beta_1\beta_2\beta_3=(q,x)yz$ and 
	$\delta(q,x)=(q',x',d)$, then $\beta=(q',y)$ if $d={+}1$ and
	$\beta=y$ otherwise (i.e., if $d={-}1$);
\item if $\beta_1\beta_2\beta_3=x(q,y)z$ and 
	$\delta(q,y)=(q',y',d)$, then $\beta=y'$ (for any
	$d\in\{{-}1,{+}1\}$);
\item if $\beta_1\beta_2\beta_3=xy(q,z)$ and 
	$\delta(q,z)=(q',z',d)$, then $\beta=(q',y)$ if $d={-}1$ and
	$\beta=y$ otherwise;
\item if  $\beta_1\beta_2\beta_3=xyz$, then $\beta=y$. 
\end{itemize}
We note that $\vdash$ is a partial function, in fact.
By a \emph{configuration} of $M$ we mean a mapping
$C:\Zset\rightarrow\Delta$
where $C(j)\neq \blank$ for only finitely many $j\in\Zset$ and
$C(j)\in Q\times \Gamma$ for precisely one $j\in\Zset$, called the
\emph{head-position}; if  $C(j)=(q_+,x)$  for the head-position $j$ 
(and $x\in\Gamma$)
then $C$
is \emph{accepting}, and if  $C(j)=(q_-,x)$  then $C$
is \emph{rejecting}.

We put $C\vdash C'$ (thus overloading the
symbol $\vdash$) if $\big(C(j{-}1), C(j), C(j{+}1)\big)\vdash C'(j)$ for all
$j\in\Zset$. This relation $\vdash$ is again a partial function; if
$C$ is \emph{final}, i.e. accepting or rejecting, then there is no $C'$ such that $C\vdash
C'$.

Given a word $w=a_1a_2\cdots a_n\in \Sigma^*$ (hence $|w|=n$), we define the
respective initial configuration as
$C_0^w$ where $C_0^w(0)=(q_0,a_1)$ if $n\geq 1$ 
and $C_0^w(0)=(q_0,\blank)$ if $n=0$, $C^w_0(j)=a_{j+1}$ for all
$j\in[1,n{-}1]$, and $C^w_0(j)=\blank$ for all $j<0$ and all $j\geq
n$. If $C^w_i$ is not final,
then we define
$C^w_{i+1}$ so that $C^w_i\vdash C^w_{i+1}$.
The \emph{computation on} $w$ is either the finite sequence 
$C_0^w,C^w_1,C^w_2,\dots, C_t^w$ where  $C_t^w$ is final (accepting or
rejecting), or the infinite sequence $C_0^w,C^w_1,C^w_2,\dots$; 
formally we put $C^w_i(j)=\bot$ (for $\bot\not\in\Delta$) if
 there is a final $C^w_t$ and $i>t$.

By $L(M)$ we denote the \emph{language accepted by} $M$,
i.e. the set $\{w\in\Sigma^*\mid$ the computation on $w$ finishes with
an accepting configuration$\}$.

Given (our fixed) Turing machine $M$ and a word $w=a_1a_2\dots
a_n\in\Sigma^*$,
we define the r-game 
$$\calG^M_w=(V,V_\exists, \gt{}, \calT)$$
where $V_\exists=\{s_0,s_{F}\}\cup(\Delta\times\Nat\times\Zset)$,
$V_\forall=V\smallsetminus V_{\exists}=\Delta^3\times\Nat\times\Zset$, $\calT=\{s_F\}$,
and where the transition
relation $\gt{}$ is defined as
follows:
\begin{enumerate}[1)]
	\item		Eve's moves:
		\begin{enumerate}[a)]
			\item 
				$s_0\gt{}((q_+,x),i,j)$ for all
				$x\in\Gamma, i\in\Nat,j\in\Zset$;
			\item
		$(\beta,i,j)\gt{}((\beta_1,\beta_2,\beta_3), i{-}1,j)$
if $i\geq 1$ and 
$(\beta_1,\beta_2,\beta_3)\vdash \beta$;
\item
$(\beta,0,j)\gt{}s_F$ if $\beta=C^w_0(j)$.
\end{enumerate}
	
\item Adam's moves:
$((\beta_1,\beta_2,\beta_3),i,j)\gt{}(\beta_\ell,i,j{-}2{+}\ell)$
for  $\ell\in\{1,2,3\}$.
\end{enumerate}
The next proposition shows how Eve's winning region is related to
$M$'s computation on $w$.

\begin{prop}\label{prop:abstractgame}
In $\calG^M_w$ we have
$\winareaE=X_0\cup X_1\cup X_2$
where 
\begin{itemize}
	\item
		$X_0=\{s_0,s_F\}$ if $w\in L(M)$ and $X_0=\{s_F\}$ if  $w\not\in
L(M)$,
\item
$X_1=\big\{(\beta,i,j)\mid C^w_i(j)=\beta\big\}$,
\item
$X_2=\big\{((\beta_1,\beta_2,\beta_3),i,j)\mid 
C^w_i(j{-}1)C^w_{i}(j)C^w_i(j{+}1)=\beta_1\beta_2\beta_3\big\}$.
\end{itemize}
\end{prop}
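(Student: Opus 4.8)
The plan is to prove the two inclusions $\winareaE \subseteq X_0\cup X_1\cup X_2$ and $X_0\cup X_1\cup X_2 \subseteq \winareaE$ separately, exploiting the fact that $\vdash$ (both at the symbol-triple level and at the configuration level) is a \emph{partial function}, so that from a correct claim the successor is forced. For the direction $X_0\cup X_1\cup X_2\subseteq\winareaE$ I would argue by induction on $i$ (with a nested, trivial sub-induction for the Adam-vertices at a fixed $i$). Base case $i=0$: if $(\beta,0,j)\in X_1$ then $\beta = C^w_0(j)$, so by rule 1c) there is a transition $(\beta,0,j)\gt{} s_F\in\calT$, hence $(\beta,0,j)\in\winareaE$; and $s_F\in W_0$. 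Inductive step for $X_1$: if $C^w_i(j)=\beta$ with $i\geq 1$, then since $C^w_{i-1}\vdash C^w_i$ we have $(C^w_{i-1}(j{-}1),C^w_{i-1}(j),C^w_{i-1}(j{+}1))\vdash\beta$; Eve plays rule 1b) to the Adam-vertex $((\beta_1,\beta_2,\beta_3),i{-}1,j)$ with $\beta_\ell = C^w_{i-1}(j{-}2{+}\ell)$, which lies in $X_2$. For $X_2$: from such an Adam-vertex \emph{every} move 2) goes to $(\beta_\ell, i{-}1, j{-}2{+}\ell) = (C^w_{i-1}(j{-}2{+}\ell), i{-}1, j{-}2{+}\ell)\in X_1$, which is in $\winareaE$ by the induction hypothesis; since Adam has at least one move and all lead into $\winareaE$, the vertex is in $\winareaE$. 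Finally, if $w\in L(M)$, the computation ends in an accepting $C^w_t$ with head-position $j_0$, so $C^w_t(j_0)=(q_+,x)$ for some $x$; Eve plays $s_0\gt{}((q_+,x),t,j_0)$ by rule 1a) into $X_1$, so $s_0\in\winareaE$.

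For the converse $\winareaE\subseteq X_0\cup X_1\cup X_2$, I would show by induction on $\rank(s)$ that every $s\in W_\lambda$ lies in $X_0\cup X_1\cup X_2$. Rank $0$: $W_0=\{s_F\}\subseteq X_0$. For the step, a vertex of the form $(\beta,i,j)\in V_\exists$ of positive rank has a rank-reducing move; the only possibilities are rule 1b) (forcing $i\geq1$ and $(\beta_1,\beta_2,\beta_3)\vdash\beta$ with the target $((\beta_1,\beta_2,\beta_3),i{-}1,j)$ of smaller rank, hence in $X_2$ by IH) or rule 1c) (forcing $i=0$ and $\beta=C^w_0(j)$, so the vertex is in $X_1$ directly). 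In the first case, membership of the target in $X_2$ means $\beta_\ell = C^w_{i-1}(j{-}2{+}\ell)$ for all $\ell$; combined with $(\beta_1,\beta_2,\beta_3)\vdash\beta$ and $C^w_{i-1}\vdash C^w_i$ — using that $\vdash$ on triples is a partial function, so the value is uniquely $C^w_i(j)$ — we get $\beta = C^w_i(j)$, i.e. the vertex is in $X_1$. For an Adam-vertex $((\beta_1,\beta_2,\beta_3),i,j)$ of positive rank, \emph{all} moves are rank-reducing, so each $(\beta_\ell, i, j{-}2{+}\ell)$ is in $X_1$ by IH, i.e. $\beta_\ell = C^w_i(j{-}2{+}\ell)$ for $\ell=1,2,3$; but that is exactly the defining condition for membership in $X_2$. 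Lastly, if $s_0\in\winareaE$ it has a (rank-reducing) move, necessarily by rule 1a) to some $((q_+,x),i,j)\in X_1$, so $C^w_i(j)=(q_+,x)$; since $C^w_i$ has a head in state $q_+$ it is accepting and $i=t$ is the length of the computation, so $w\in L(M)$, placing $s_0$ in $X_0$.

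The one point needing care — and I expect it to be the main (though modest) obstacle — is the case analysis pinning down $\beta$ in the $X_1$-step of the converse: one must check that no move other than 1b)/1c) is available from $(\beta,i,j)$ (it is not, by inspection of the rules), that the IH indeed forces the triple at level $i{-}1$ to coincide with the window of $C^w_{i-1}$ around position $j$, and then invoke functionality of $\vdash$ to conclude $\beta = C^w_i(j)$ rather than merely ``$\beta$ is \emph{some} valid successor''. One should also note that a vertex $(\beta,i,j)$ or $((\beta_1,\beta_2,\beta_3),i,j)$ with rank $0$ cannot occur except for $s_F$ (these vertices are not in $\calT$), so rank $>0$ is the only nontrivial case, and limit ordinals do not arise because the game is finitely branching, so ordinary induction on $\rank(s)\in\Nat$ suffices. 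Assembling the three pieces $X_0$, $X_1$, $X_2$ from both inclusions gives the claimed equality.
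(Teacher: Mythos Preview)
Your proposal is correct and takes essentially the same approach as the paper: both arguments rest on the functionality of $\vdash$ and proceed by induction on the level~$i$. The paper does the two inclusions in one pass (showing at each level $i$ that $(\beta,i,j)\in\winareaE$ iff $\beta=C^w_i(j)$, and similarly for triples), whereas you split them and run the converse separately by induction on $\rank$; this is an organisational rather than a substantive difference. One small slip: the game is \emph{not} finitely branching at $s_0$ (rule 1a) has infinitely many targets), so your justification for $\rank(s)\in\Nat$ is not quite right --- but the conclusion still holds, since $s_0$ is an Eve state and every Adam state has only three outgoing moves, which keeps all ranks finite.
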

\begin{proof}
By definition we have $s_F\in\winareaE$. We first ignore the question
if $s_0\in\winareaE$, and show the rest of the claim by induction 
on $i\in\Nat$.
The base case ($i=0$) is clear due to the transitions from the points
1c) and 2).
The induction step is also easy to check, when we note 
that for any 
$(\beta,i,j)$ with $i\geq 1$ we have:
\begin{itemize}
	\item
		if $\beta=C^w_i(j)$ then there is a transition
		$(\beta,i,j)\gt{}((\beta_1,\beta_2,\beta_3), i{-}1,j)$
		such that 
		\\
		$\beta_1\beta_2\beta_3=C^w_{i-1}(j{-}1)C^w_{i-1}(j)C^w_{i-1}(j{+}1)$, and
	\item
if $\beta\neq C^w_i(j)$ then for every  transition
		$(\beta,i,j)\gt{}((\beta_1,\beta_2,\beta_3), i{-}1,j)$
		there is $\ell\in\{1,2,3\}$ such that 
		$\beta_\ell\neq C^w_{i-1}(j{-}2{+}\ell)$.
\end{itemize}
The claim for $s_0$ follows, since 
 $w\in L(M)$ iff 
$C^w_t(j)=(q_+,x)$ for some $t\in\Nat$, $j\in\Zset$,  $x\in\Gamma$.
\end{proof}	

If we think of mimicking the game $\calG^M_w$ by a one-counter game, it
is natural to represent a state $(\beta,i,j)$, or
$((\beta_1,\beta_2,\beta_3),i,j)$, so that $\beta$, or $(\beta_1,\beta_2,\beta_3)$, is (in)
a control state and $(i,j)$ is suitably represented by one (counter)
value $k$.
This seems manageable when we are guaranteed
that 
in the computation of $M$ on
$w=a_1a_2\cdots a_n$ 
the head-position is never outside $[0,m{-}1]$ for a fixed $m\geq
n$; we now assume this, while also assuming $n\geq 1$ for convenience.
In such a case $\calG^M_w$ can be naturally
adjusted to yield a game $\calG^M_{w,m}$ in which
the head-position is kept inside
$[0,m{-}1]$. Our aim is to mimic $\calG^M_{w,m}$ by 
(the r-game associated with) an ocn-game $\calN^M_{w,m}$; a first attempt 
to built such  $\calN^M_{w,m}$ can look as follows:

\begin{equation}\label{eq:ocgamePone}
	\begin{array}{lllll}
	   p_0 \trans{+1} p_0 &
	   p_0 \trans{0} s_{(q_+,x)} & (\textnormal{where } x\in \Gamma)\\
		s_\beta\trans{-m}\AStateIII & & (\textnormal{where }(\beta_1,\beta_2,\beta_3)\vdash\beta)\\
	   \AStateIII \trans{-1} s_{\beta_1} &  \AStateIII \trans{0} s_{\beta_2} 
	   &\AStateIII \trans{+1}s_{\beta_3}\\    
   
   \end{array}
\end{equation}
By the configuration $p_0(0)$ we represent the state $s_0$ of
$\calG^M_{w,m}$.
A configuration $s_\beta(k)$ is intended to represent the state
$(\beta, k\div m, k\bmod m)$; similarly
$s_{(\beta_1,\beta_2,\beta_3)}(k)$ is intended to represent the state
$((\beta_1,\beta_2,\beta_3), k\div m, k\bmod m)$.
Hence $p_0$ and $s_\beta$ are Eve's control states, while 
$s_{(\beta_1,\beta_2,\beta_3)}$ are Adam's control states.

The ocn-game given by the rules in~(\ref{eq:ocgamePone})
does not mimic the game $\calG^M_{w,m}$ faithfully,
due to possible ``cheating'' related to the boundary
head-positions.
Therefore we add 

\begin{equation}\label{eq:ocgamePtwo}
	\begin{array}{lllll}
		s_\beta\trans{-m}s^L_{(\blank,\beta_2,\beta_3)} 
		&		s^L_{(\blank,\beta_2,\beta_3)}\trans{0}\bar{g}_0
		&
		s^L_{(\blank,\beta_2,\beta_3)}\trans{0}s_{\beta_2}
		&
s^L_{(\blank,\beta_2,\beta_3)}\trans{+1}s_{\beta_3}
   \end{array}
\end{equation}
for the cases $(\blank,\beta_2,\beta_3)\vdash\beta$.
By entering a configuration $s^L_{(\blank,\beta_2,\beta_3)}(k)$
Eve also ``claims'' that the head-position is $0$, i.e., that $k\bmod m=0$.
The state $s^L_{(\blank,\beta_2,\beta_3)}$ is Adam's, who can believe
the claim and play accordingly, or
decide to challenge the claim by performing  
$s^L_{(\blank,\beta_2,\beta_3)}(k)\gt{0}\bar{g}_0(k)$.
Symmetrically we add 
\begin{equation}\label{eq:ocgamePthree}
	\begin{array}{lllll}
		s_\beta\trans{-m}s^R_{(\beta_1,\beta_2,\blank)} 
		&
		s^R_{(\beta_1,\beta_2,\blank)}\trans{-1}s_{\beta_1}
				&
		s^R_{(\beta_1,\beta_2,\blank)}\trans{0}s_{\beta_2}
		&
		s^R_{(\beta_1,\beta_2,\blank)}\trans{0}\bar{g}_{m-1}
   \end{array}
\end{equation}
for the cases $(\beta_1,\beta_2,\blank)\vdash\beta$.
By entering a configuration $s^R_{(\beta_1,\beta_2,\blank)}(k)$
Eve also claims that the head-position is $m{-}1$, i.e., that $k\bmod
m=m{-}1$.
In a state $s^R_{(\beta_1,\beta_2,\blank)}$ Adam can
decide to challenge the claim by performing  
$s^R_{(\beta_1,\beta_2,\blank)}(k)\gt{0}\bar{g}_{m-1}(k)$.

To complete this reasoning, by entering 
$\AStateIII$ (with no superscript) Eve claims that
$0<(k\bmod m)<m{-}1$; Adam can challenge this by using one of the rules
\begin{equation}\label{eq:ocgamePfour}
	\begin{array}{lllll}
	\AStateIII\trans{0}{g}_{0}
	& & \AStateIII\trans{0}{g}_{m-1}
   \end{array}
\end{equation}
Hence control states $g_c$, for $c\in\{0,m{-}1\}$, can be viewed as
Adam's claims
``the current counter value $k$ satisfies $k\bmod m=c$''; similarly 
$\bar{g}_c$ is Adam's claim
``$k\bmod m\neq c$''. We need to add some rules guaranteeing Eve's win 
in the configurations $g_c(k)$ and $\bar{g}_c(k)$ precisely when
Adam's claims are incorrect. Moreover, we need that in $s_\beta(k)$
where $k<m$ Eve can force her win iff $\beta=C^w_0(k)$.
The required properties are achieved by completing 
the rules~(\ref{eq:ocgamePone}),
(\ref{eq:ocgamePtwo}),~(\ref{eq:ocgamePthree}),~(\ref{eq:ocgamePfour})
with
\begin{equation}\label{eq:ocgamePfive}
   \begin{array}{lllllllll}
	   s_{\blank}\trans{0}f  & &  & s_\beta\trans{-j}\bar{e}_0 
	   & \textnormal{where } 0\leq j\leq n{-}1 \textnormal{ and }
	   C^w_0(j)=\beta
   \end{array}
\end{equation}
and with the following final set:
\begin{equation}\label{eq:ocgamePsix}
   \begin{array}{lllllllll}
	   f \trans{-m} \WinA & e_c \trans{-c} e_c'  &
	   \bar{e}_c \trans{-c} \bar{e}_c' &
	   g_c \trans{-m} g_c & \bar{g}_c \trans{-m} \bar{g}_c 
       \\
       f \trans{0} f' &
   e_c \trans{0} \WinE  &  \bar{e}_c' \trans{-1} \WinA & 
      g_c \trans{0} g_c' &  
      
       \bar{g}_c \trans{0} \bar{g}_c'       \\
       f' \trans{-n} \WinE &
   e_c' \trans{-1} \WinE 
       &     \bar{e}_c' \trans{0} \WinE       &
      g_c' \trans{-m} \WinE &  \bar{g}_c' \trans{-m} \WinE 
     \\
     & &  & 
      g_c' \trans{0} e_c &  \bar{g}_c' \trans{0} \bar{e}_c &&
      \\
   \end{array}
\end{equation}
Here $c$ ranges over two values, $0$ and $m{-}1$. 

Given $M$, $w=a_1a_2\cdots a_n$, and $m\geq n$,
we thus constructed the ocn-game
\begin{equation}\label{eq:constrocngame}
\calN^M_{w,m}=(Q,Q_{\exists}, \delta_{\calN}, \WinE)
\end{equation}
where
\begin{itemize}
	\item		
$Q_{\exists}=\{p_0, f',e_c', \bar{e}_c, g_c', \bar{g}'_c, \WinE, \WinA\}\cup
\{s_\beta\mid \beta\in\Delta\}$, 
\item
$Q_{\forall}=\{f, e_c, \bar{e}'_c, g_c, \bar{g}_c\}\cup
\{s_{(\beta_1,\beta_2,\beta_3)}\mid \beta_i\in\Delta\}\cup
\{s^L_{(\blank,\beta_2,\beta_3)}\mid\beta_i\in\Delta\}\cup
\{s^R_{(\beta_1,\beta_2,\blank)}\mid\beta_i\in\Delta\}$
\end{itemize}
(recall that 	$Q_{\forall}=Q\smallsetminus Q_{\exists}$),
and the set $\delta_\calN$ of rules is given 
by~(\ref{eq:ocgamePone}), (\ref{eq:ocgamePtwo}), (\ref{eq:ocgamePthree}),
(\ref{eq:ocgamePfour}), (\ref{eq:ocgamePfive}), (\ref{eq:ocgamePsix}).

\begin{prop}\label{prop:spaceboundedocgame}
In the r-game associated with $\calN^M_{w,m}$ we have $p_0(0)\in\winareaE$ iff
the computation of $M$ on $w$ never moves the head out 
of $[0,m{-}1]$ and finishes with accepting.
\end{prop}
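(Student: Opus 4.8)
The plan is to establish a single ``emulation'' lemma describing Eve's winning region of the r-game associated with $\calN^M_{w,m}$ in terms of $M$'s computation on $w$, and then to read off the Proposition. Write $i=k\div m$ and $j=k\bmod m$. The key lemma is: for every $\beta\in\Delta$ and $k\in\Nat$, the configuration $s_\beta(k)$ belongs to $\winareaE$ if and only if $M$'s computation on $w$ has not halted before step $i$, keeps the head inside $[0,m{-}1]$ throughout $C^w_0,C^w_1,\dots,C^w_i$, and satisfies $C^w_i(j)=\beta$. (Informally this says that $\calN^M_{w,m}$ faithfully emulates the ``wall-bounded'' game $\calG^M_{w,m}$, so one could alternatively derive it from a head-bounded variant of Proposition~\ref{prop:abstractgame}; I prefer a direct argument since $\calG^M_{w,m}$ was only described informally.) Granting the lemma, the Proposition follows at once: a run made only of increments reaches no target, so any winning strategy of Eve from $p_0(0)$ must eventually use a rule $p_0\trans{0}s_{(q_+,x)}$; hence $p_0(0)\in\winareaE$ iff $s_{(q_+,x)}(k)\in\winareaE$ for some $x\in\Gamma$ and $k\in\Nat$, which by the lemma holds iff $M$'s computation on $w$ stays inside $[0,m{-}1]$ throughout and reaches a configuration carrying a $q_+$-symbol, i.e.\ iff it stays in bounds and halts accepting (an accepting configuration being necessarily the last one of the computation).

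I would first dispose of the small ``verification gadgets'' of \eqref{eq:ocgamePfive}--\eqref{eq:ocgamePsix} by a routine case analysis on their few reachable configurations: for $c\in\{0,m{-}1\}$ one gets $g_c(k)\in\winareaE$ iff $k\bmod m\neq c$; $\bar g_c(k)\in\winareaE$ iff $k\bmod m=c$; $f(k)\in\winareaE$ iff $n\le k<m$; and $\bar e_0(k)\in\winareaE$ iff $k=0$. Combining the last two with \eqref{eq:ocgamePfive} already yields the $i=0$ instance of the key lemma: for $k<m$, $s_\beta(k)\in\winareaE$ iff $\beta=C^w_0(k)$ — the branch $s_\blank\trans{0}f$ covers $\beta=\blank$, which is correct precisely when $k\ge n$, while a branch $s_\beta\trans{-j}\bar e_0$ covers a non-blank $\beta$, and is won by Eve precisely when $j=k\le n{-}1$ and $C^w_0(k)=\beta$.

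The key lemma itself I would prove by induction on $i=k\div m$. The base case $i=0$ is the instance just noted, together with the observation that $C^w_0$ has its head at $0\in[0,m{-}1]$. For $i\ge 1$ (so $k\ge m$): for the ``if'' direction Eve plays the rule encoding the genuine predecessor triple $C^w_{i-1}(j{-}1)\,C^w_{i-1}(j)\,C^w_{i-1}(j{+}1)$ — the plain rule of \eqref{eq:ocgamePone} when $0<j<m{-}1$, the left rule of \eqref{eq:ocgamePtwo} when $j=0$ (here $C^w_{i-1}(-1)=\blank$ since the head never left $[0,m{-}1]$), the right rule of \eqref{eq:ocgamePthree} when $j=m{-}1$ — after which every Adam response is either a symbol move onto an honest $s_{\beta_\ell}$-configuration of block $i-1$, handled by the induction hypothesis, or a boundary challenge $g_c(\cdot)$/$\bar g_c(\cdot)$ which by the gadget facts Eve wins precisely because her boundary claim was correct. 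For the ``only if'' direction one notes first that when $k\ge m$ the $f$- and $\bar e_0$-branches out of $s_\beta(k)$ are losing (again by the gadget facts), so Eve's winning move must be one of the triple/pair rules; then the resulting Adam-configuration being in $\winareaE$ forces, through its $g_c$/$\bar g_c$ successors, that the implicit boundary claim is correct, and, through the induction hypothesis applied to its three symbol-successors at block $i-1$, that the chosen triple is the actual triple of $C^w_{i-1}$ around $j$; since that triple $\vdash\beta$ and $\vdash$ is $M$'s one-step relation, this gives $C^w_{i-1}\vdash C^w_i$ at position $j$, i.e.\ $C^w_i(j)=\beta$, and the head staying in $[0,m{-}1]$ up to step $i$ is inherited from the hypothesis plus the correctness of the boundary claims.

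The delicate point — the one I expect to be the main obstacle — sits inside the ``only if'' step: ruling out that $C^w_{i-1}$ is \emph{final} (so that step $i$ exists at all). The argument is that a symbol $(q,x)$ with $q\in\{q_+,q_-\}$ can never occur in a triple that $\vdash$-derives anything, because $\delta$ is undefined on $q_+$ and $q_-$; so if $C^w_{i-1}$ were final its head would have to sit at distance $\ge 2$ from $j$, which would force $\beta$ to be a plain \emph{tape} symbol merely copied off the now-frozen tape. This is exactly the case that the Proposition does \emph{not} need: there $\beta=(q_+,x)$, and then the $\vdash$-derivation actually forces the head of $C^w_{i-1}$ to be adjacent to $j$ carrying a non-halting state, so $C^w_{i-1}$ is non-final. (One should accordingly state the key lemma with a little care: it has a harmless exceptional behaviour for tape symbols once the computation has halted, since $\vdash$ keeps copying the frozen tape indefinitely, and this is what the head-symbol case of the Proposition sidesteps.) A more mundane but unavoidable chore is to verify that the block index $k\div m$ and the position $k\bmod m$ transform as intended — the block dropping by one on each ${-}m$ step, the position shifting by ${-}1/0/{+}1$ inside the block exactly when $0<j<m{-}1$ — and that the left/right rules together with the $g_c$ and $\bar g_c$ gadgets correctly emulate the head abutting a wall.
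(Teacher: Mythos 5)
Your route --- establish the gadget facts, then characterise $\winareaE$ on the configurations $s_\beta(k)$ by induction on the block index $i=k\div m$, then read off the statement from the head-symbol case --- is in substance the paper's own argument made explicit: the paper cites Prop.~\ref{prop:abstractgame}, the construction discussion and the same gadget facts, whereas you fold the abstract-game induction and the implementation correctness into one direct induction. Your gadget analysis and your base case (for $k<m$, $s_\beta(k)\in\winareaE$ iff $\beta=C^w_0(k)$) are correct, and you rightly spotted that the lemma cannot be literally true once the computation has halted, because $\vdash$ keeps copying a frozen tape.

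There is, however, a second exceptional regime of exactly the same kind that you do not account for, and it makes one explicit step of your ``only if'' argument wrong, namely the sentence that the head staying in $[0,m{-}1]$ up to step $i$ is ``inherited from the hypothesis plus the correctness of the boundary claims''. The boundary gadgets $\bar{g}_0,\bar{g}_{m-1}$ only verify the \emph{position} claim $k\bmod m=c$; the $\blank$ component of an $s^L$/$s^R$ rule is never checked by anything. So if $C^w_{i-1}$ has its head at position $m{-}1$ in a non-halting state $q$ with $\delta(q,y)=(q',y',{+}1)$, Eve wins $s_{y'}\big(im+(m{-}1)\big)$ by playing $s_{y'}\trans{-m}s^R_{(x,(q,y),\blank)}$ with the true in-range symbols $x,(q,y)$, even though the head of $C^w_i$ sits at position $m$, outside $[0,m{-}1]$; from then on the in-range tape is frozen in the game while the real computation may re-enter the interval and overwrite cells, so both the head-in-bounds clause and the clause $C^w_i(j)=\beta$ of your lemma can fail for winnable configurations. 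The cure is the same as for the halting exception you did flag: state the lemma as ``Eve wins $s_\beta(k)$ iff $\beta$ is the symbol at $(i,j)$ of the wall-bounded dynamics that treats the cells outside $[0,m{-}1]$ as permanently blank (so the head silently vanishes when it steps out, and the tape freezes after halting)'', or reserve the strong clean statement for head symbols only; the head-symbol instance you use at the top is unaffected, since a triple deriving $(q_+,x)$ must contain a non-halting head symbol at an in-range, gadget-verified position, which forces the clean regime at level $i{-}1$. As written, though, the stated key lemma is false beyond the one exception you patched, so this second exception has to be built in before the induction goes through.
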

\begin{proof}
The claim follows by Prop.~\ref{prop:abstractgame}, by the
previous discussions accompanying the construction of
$\calN^M_{w,m}$, and by the following properties that are easy to check
(recall that $c\in\{0,m{-}1\}$):
\begin{enumerate}[1.]
	\item		$f(k)\in \winareaE$ iff $n\leq k<m$;
	\item 		$e_c(k)\in\winareaE$ iff $k\neq c$, and 
$\bar{e}_c(k)\in\winareaE$ iff $k=c$;
	\item 		$g_c(k)\in\winareaE$ iff $k\bmod m\neq c$, and 
$\bar{g}_c(k)\in\winareaE$ iff $k\bmod m=c$;
\item
	for $k<n$, $s_{\beta}(k)\in\winareaE$ iff $\beta=C^w_0(k)$.
\end{enumerate}		
Using the rules~(\ref{eq:ocgamePfive}) in 
$s_{\beta}(k)$ when  $k\geq m$ is thus losing for Eve; by this
the proof is finished.
\end{proof}

We now note that the control states
in $\calN^M_{w,m}$ are determined by $M$, as well as
the rules  except of those in~(\ref{eq:ocgamePfive})
that are dependent on $w$; to be precise, the values of 
counter-decrements ${-}m$,
${-(m{-}1)}$ (in ${-}c$ for $c=m{-}1$) and ${-}n$ are not determined
by $M$ but by ``parameters'' $w$ and $m$.

To finish the proof of Theorem~\ref{th:rgexpsphard}, we assume an
arbitrary
fixed language $L$ in \EXPSPACE. There is thus a Turing machine $M$ and a
 polynomial $p$ such that $M$ decides $L$ and 
the head-position in the computation of $M$ on any $w$
(in the alphabet of $L$)
never moves out of the
interval $[0,m{-}1]$ where $m=2^{p(n)}$ for $n=|w|$. 
Given $w$, it is straightforward to construct $\calN^M_{w,m}$,
by filling the rules~(\ref{eq:ocgamePfive})
and the parameters $n,m$ into a fixed scheme.
Since $m$ can be presented in binary by using $p(n){+}1$ bits, we can
construct $\calN^M_{w,m}$ 
in
logarithmic work-space (from a given $w$).

\section{Reducing reachability games to
(bi)simulation games}\label{sec:reachtosimul}

We first discuss a reduction in a general framework,
and then apply it to the case of (succinct) one-counter nets.

We assume a (general) r-game
$\calG=(V,V_\exists, \gt{}, \calT)$, and below we 
define the LTS 
\begin{equation}\label{eq:ltstogame}
\calL(\calG)=(S,Act,(\gt{a})_{a\in Act}).
\end{equation}
(Cf. Fig.~\ref{fig:attchoice} and~\ref{fig:defendchoice}, where we now
ignore the bracketed parts of transition-labels.)

\medskip
\noindent
The set $S$ is defined as follows (recall that
$V_{\forall}=V\smallsetminus V_{\exists}$): 
\begin{itemize}
	\item every $s\in V$ and its
		``copy'' $s'$ is in $S$;
	\item
		if $s\in V_{\forall}$ and $s\gt{} \bar{s}$, then a state
		$\langle s,\bar{s}\rangle$ is in $S$ 
		\\
		(in Fig.~\ref{fig:defendchoice}
we write, e.g., $s^1_3$ instead of 
$\langle s_1,s_3\rangle$);		
	\item
if $s\in V_{\forall}$ and $X=\{\bar{s}\mid s\gt{}	\bar{s}\}$ is
nonempty,
then a state $\langle s,X\rangle$ is in $S$ 
\\		
(in Fig.~\ref{fig:defendchoice}
		we write $s^1_{23}$ instead of 
		$\langle s_1,\{s_2,s_3\}\rangle$).
\end{itemize}
We put  $Act=\{a_c,a_{win}\}\cup\{a_{\langle s,\bar{s}\rangle}\mid
s\gt{}\bar{s}\}$ and define $\gt{a}$ for $a\in Act$ as follows: 
\begin{itemize}
	\item if $s\in V_{\exists}$ and $s\gt{} \bar{s}$, then  
		$s\gtl{a_{\langle s,\bar{s}\rangle}} \bar{s}$ and
		$s'\gtl{a_{\langle s,\bar{s}\rangle}} \bar{s}'$
\\
		(in Fig.~\ref{fig:attchoice} we write, e.g., $a^1_3$ instead of
	$a_{\langle s_1,s_3\rangle}$); 
\item
 if $s\in V_{\forall}$ and 
 $X=\{\bar{s}\mid s\gt{}\bar{s}\}\neq\emptyset$,
 then
	\begin{enumerate}[a)]
		\item
		$s\gtl{a_c}\langle s,X\rangle$,
		and $s\gtl{a_c} \langle s,\bar{s}\rangle$,
		$s'\gtl{a_c} \langle s,\bar{s}\rangle$
		for all $\bar{s}\in X$
\\
	(cf. Fig.~\ref{fig:defendchoice} 
	where $s=s_1$ and $X=\{s_2,s_3\}$
	and consider dashed edges as
	normal edges; the subscript $c$ in $a_c$ stands for
	``choice'');
\item
	for each $\bar{s}\in X$ we have
$\langle s,X\rangle\gtl{a_{\langle
s,\bar{s}\rangle}}\bar{s}$
and $\langle s,\bar{s}\rangle\gtl{a_{\langle
s,\bar{s}\rangle}}\bar{s}'$;
moreover, for each $\bar{\bar{s}}\in X\smallsetminus \{\bar{s}\}$ 
we have $\langle s,\bar{s}\rangle\gtl{a_{\langle
s,\bar{\bar{s}}\rangle}}\bar{\bar{s}}$
\\
(e.g., in Fig.~\ref{fig:defendchoice} we thus have
$s^1_2\gtl{a^1_2}s'_2$ and $s^1_2\gtl{a^1_3}s_3$). 
\end{enumerate}
\item
for each $s\in\calT$ we have $s\gtl{a_{win}} s$ 
(for special $a_{win}$ that is not performable from $s'$).
\end{itemize}

We recall that $\bisim{\subseteq}\simul$ where
$\simul$ denotes simulation preorder and
$\bisim$ bisimulation equivalence.

\begin{prop}\label{prop:reachreduced}
	For any $s\in V$ and any relation
	$\rho$ satisfying
	$\bisim{\subseteq}\mathop{\rho}{\subseteq}\simul$
	we have: 
	\begin{enumerate}[a)]
		\item
if	$s\in\winareaE$ (in $\calG$), then $s\notsimul s'$ (in
$\calL(\calG)$) and thus $(s,s')\not\in \rho$;
\item
if $s\not\in\winareaE$, then $s\bisim s'$ and thus $(s,s')\in \rho$.	
 \end{enumerate}
\end{prop}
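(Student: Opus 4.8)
The plan is to prove both parts by exhibiting explicit strategies / relations in the associated (bi)simulation game on $\calL(\calG)$. For part a), suppose $s\in\winareaE$, so $s\in W_\lambda$ for some ordinal $\lambda=\rank(s)$. I would describe a winning strategy for Attacker in the simulation game from the pair $(s,s')$, proceeding by (transfinite, but effectively ordinary since the games of interest are finitely branching) induction on $\rank(s)$. The invariant to maintain is: the current pair is of the form $(u,u')$ with $u\in\winareaE$ and $\rank(u)$ strictly decreasing along Attacker's moves, OR the play has reached a target-state pair $(t,t')$ with $t\in\calT$, where Attacker plays $t\gtl{a_{win}}t$ and Defender, having no $a_{win}$-transition from $t'$, loses. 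For the inductive step there are two cases according to whether $u$ is Eve's or Adam's vertex. If $u\in V_\exists$ with $\rank(u)>0$, there is a rank-reducing transition $u\gt{}\bar s$; Attacker plays $u\gtl{a_{\langle u,\bar s\rangle}}\bar s$, and the only Defender response from $u'$ with that action is $u'\gtl{a_{\langle u,\bar s\rangle}}\bar s'$, so the play moves to $(\bar s,\bar s')$ with smaller rank. If $u\in V_\forall$ with $\rank(u)>0$, Attacker plays the choice action $u\gtl{a_c}\langle u,X\rangle$ where $X=\{\bar s\mid u\gt{}\bar s\}$; Defender must respond from $u'$ by some $u'\gtl{a_c}\langle u,\bar s\rangle$ for a single $\bar s\in X$. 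Now it is Attacker's turn again in $(\langle u,X\rangle,\langle u,\bar s\rangle)$; Attacker picks exactly that $\bar s$ and plays $\langle u,X\rangle\gtl{a_{\langle u,\bar s\rangle}}\bar s$, forcing Defender into $\langle u,\bar s\rangle\gtl{a_{\langle u,\bar s\rangle}}\bar s'$, landing in $(\bar s,\bar s')$. Since all transitions from an Adam-vertex in $\winareaE$ are rank-reducing, $\rank(\bar s)<\rank(u)$. Because ranks are well-ordered (naturals in the cases of interest), this descent terminates at $\rank=0$, i.e.\ at $\calT$, where Attacker wins as above. Hence $s\notsimul s'$, and since $\mathop{\rho}\subseteq\simul$ we get $(s,s')\notin\rho$.

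For part b), suppose $s\notin\winareaE$. I would exhibit a single bisimulation $R$ containing $(s,s')$. The natural candidate is
$$
R=\{(v,v')\mid v\in V\}\ \cup\ \{(v,v)\mid v\in S\}\ \cup\ \{(\langle u,X\rangle,\langle u,\bar s\rangle),(\langle u,\bar s\rangle,\langle u,X\rangle)\mid u\in V_\forall\setminus\winareaE,\ \bar s\in X\notin\winareaE\},
$$
restricted to pairs both of whose components lie outside $\winareaE$ (including the relevant intermediate states $\langle u,X\rangle$, $\langle u,\bar s\rangle$), plus the full identity relation as a safety net for any other reachable configurations. I would then verify the bisimulation condition by checking each kind of pair. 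The identity pairs are trivial. For $(v,v')$ with $v\notin\winareaE$: if $v\in V_\exists$, every transition $v\gtl{a_{\langle v,\bar s\rangle}}\bar s$ has $\bar s\notin\winareaE$ (else $v\in\winareaE$), and is matched by $v'\gtl{a_{\langle v,\bar s\rangle}}\bar s'$ with $(\bar s,\bar s')\in R$, and symmetrically; no $a_{win}$ is available since $v\notin\calT$. If $v\in V_\forall$ with successor set $X$: since $v\notin\winareaE$ there is some $\bar s_0\in X$ with $\bar s_0\notin\winareaE$. Attacker's moves $v\gtl{a_c}\langle v,X\rangle$ and $v\gtl{a_c}\langle v,\bar s\rangle$ (for $\bar s\in X$) must be answered by Defender from $v'$; Defender answers $\langle v,X\rangle$ by $v'\gtl{a_c}\langle v,\bar s_0\rangle$ (picking a losing-for-Eve $\bar s_0$), and answers $\langle v,\bar s\rangle$ by $v'\gtl{a_c}\langle v,\bar s\rangle$ if $\bar s\notin\winareaE$, or again by $\langle v,\bar s_0\rangle$ if $\bar s\in\winareaE$ — but I must double-check whether the pair $(\langle v,\bar s\rangle,\langle v,\bar s_0\rangle)$ with $\bar s\in\winareaE$ can actually arise; it cannot, because Attacker playing $v\gtl{a_c}\langle v,\bar s\rangle$ into a state leading to $\bar s\in\winareaE$ still lands, after the forced follow-up, in $(\bar s,\bar s')$ which is NOT in $R$, so such a move would have to be shown to still be matchable — this is exactly the "defender installs syntactic equality" mechanism, and the intermediate states $\langle v,\bar s\rangle$ have the extra transitions $\langle v,\bar s\rangle\gtl{a_{\langle v,\bar{\bar s}\rangle}}\bar{\bar s}$ precisely so that Defender can steer back to a common state. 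I would therefore include in $R$ the pairs needed to make this "equality-installing" detour work, and verify the remaining transition matchings for $\langle v,X\rangle$ vs $\langle v,\bar s\rangle$ (both have exactly the $a_{\langle v,\bar s\rangle}$-transitions, to $\bar s$ and $\bar s'$ respectively) and for $\langle v,\bar s\rangle$ vs $\langle v,\bar s\rangle$.

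The main obstacle is getting the set $R$ in part b) exactly right: the asymmetry between the primed and unprimed copies, the intermediate "choice" states $\langle u,X\rangle$ versus $\langle u,\bar s\rangle$, and the auxiliary transitions $\langle s,\bar s\rangle\gtl{a_{\langle s,\bar{\bar s}\rangle}}\bar{\bar s}$ must all be accounted for so that whenever Attacker tries to exploit a successor $\bar s\in\winareaE$ of an Adam-vertex $u\notin\winareaE$, Defender can force the play onto a pair of identical configurations (of the same copy), where the identity relation finishes the job. I expect that the cleanest way to organize this is to define $R$ as the union of the identity on $S$ with $\{(v,v')\mid v\in V\setminus\winareaE\}$ and $\{(\langle u,X\rangle,\langle u,\bar s\rangle)\mid u\in V_\forall\setminus\winareaE,\ \bar s\in X\cap(S\setminus\winareaE)\}$ together with the symmetric pairs, and then to check the back-and-forth conditions case by case, invoking the defining clauses of $\winareaE$ (clauses a) and b) in the definition) to guarantee the existence of a non-winning successor $\bar s_0$ at every Adam-vertex outside $\winareaE$. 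Once $R$ is verified to be a bisimulation, $(s,s')\in R$ gives $s\bisim s'$, hence $(s,s')\in\rho$ since $\bisim{\subseteq}\rho$, completing part b).
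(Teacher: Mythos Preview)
Your proposal is correct and follows essentially the same approach as the paper. For part a) the paper uses a least-rank contradiction rather than an explicit induction, but the content is identical: Attacker follows a rank-reducing Eve-move in the $V_\exists$ case, and in the $V_\forall$ case plays $a_c$ to $\langle u,X\rangle$, then copies whatever $\bar s$ Defender committed to. For part b) the paper states exactly the relation you eventually settle on,
\[
\{(s,s)\mid s\in S\}\ \cup\ \{(s,s')\mid s\in V\smallsetminus\winareaE\}\ \cup\ \{(\langle s,X\rangle,\langle s,\bar s\rangle)\mid s\in V_\forall\smallsetminus\winareaE,\ \bar s\in V\smallsetminus\winareaE\},
\]
and simply asserts it is a bisimulation; your case analysis (including the observation that when Attacker plays $s\gtl{a_c}\langle s,\bar s\rangle$ Defender can match with the \emph{same} state on the right and land in the identity) is the verification the paper leaves implicit. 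Two minor clean-ups: you do not need the symmetric pairs, and your intermediate notation ``$\bar s\in X\notin\winareaE$'' and ``$S\setminus\winareaE$'' should read $\bar s\in X$ with $\bar s\notin\winareaE$ and $V\setminus\winareaE$ respectively.
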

\begin{proof}
a) For the sake of contradiction suppose that there is $s\in\winareaE$ 
such that $s\simul s'$; we consider such $s$ with 
the least rank.
We note that $\rank(s)>0$, 
since $s\in\calT$ entails $s\notsimul s'$ due to the
	transition $s\gtl{a_{win}} s$.

If $s\in V_{\exists}$, then let $s\gt{} \bar{s}$ be a rank-reducing
transition. Attacker's move $s\gtl{a_{\langle s,\bar{s}\rangle}}
\bar{s}$, from the pair $(s,s')$,
must be responded with $s'\gtl{a_{\langle s,\bar{s}\rangle}} \bar{s}'$;
but we have $\bar{s}\notsimul \bar{s}'$ by the ``least-rank''
assumption,
which contradicts with the assumption $s\simul s'$.

If $s\in V_{\forall}$, then $X=\{\bar{s}\mid s\gt{}\bar{s}\}$ is
nonempty (since $s\in\winareaE$) and 
$\rank(\bar{s})<\rank(s)$ for all $\bar{s}\in X$.
For the pair  $(s,s')$ 
we now consider Attacker's move
$s\gtl{a_{c}} \langle s,X \rangle$. Defender can choose 
$s'\gtl{a_{c}} \langle s,\bar{s} \rangle$ for any $\bar{s}\in X$
(recall that $\rank(\bar{s})<\rank(s)$).
In the current pair $(\langle s,X \rangle, \langle s,\bar{s} \rangle)$
Attacker can play $\langle s,X \rangle \gtl{a_{\langle s,\bar{s}
\rangle}} \bar{s}$, and this must be responded by 
$\langle s,\bar{s} \rangle \gtl{a_{\langle s,\bar{s}
\rangle}} \bar{s}'$. But we again have 
$\bar{s}\notsimul \bar{s}'$ by the ``least-rank'' assumption,
which contradicts with $s\simul s'$.

\medskip

b) It is easy to verify that the following set
is a bisimulation in $\calL_\calG$:
\\
 $\{(s,s)\mid s\in S\}
 \cup\{(s,s')\mid s\in V\smallsetminus \winareaE\}
 \cup \{(\langle s,X \rangle, \langle s,\bar{s} \rangle)\mid 
 s\in V_{\forall}\smallsetminus \winareaE, \bar{s}\in V\smallsetminus
 \winareaE\}$.
\end{proof}	

\emph{Remark.}
We can note that the dashed edges in Fig.~\ref{fig:defendchoice} are
not necessary for the simulation game (i.e., if b) in
Prop.~\ref{prop:reachreduced} is reformulated to 
``if $s\not\in\winareaE$, then $s\simul s'$'');
 they are important for the
bisimulation game.

\medskip

Now we apply the described reduction to succinct one-counter nets to
obtain:

\begin{theorem}\label{th:equivhard}
	For succinct labelled one-counter nets (SOCNs),
deciding any relation containing bisimulation equivalence and 
contained in simulation preorder is EXPSPACE-hard. 
\end{theorem}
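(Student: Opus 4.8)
\medskip\noindent\emph{Proof idea.}\ The plan is to reduce from \rgsocn, which is EXPSPACE-hard by Theorem~\ref{th:rgexpsphard}, by transferring the general reduction of Section~\ref{sec:reachtosimul} to the one-counter setting. Since the ocn-games $\calN^M_{w,m}$ built in the proof of Theorem~\ref{th:rgexpsphard} equip every Adam control state with a rule of the form $q\trans{0}q''$, I would first note that \rgsocn stays EXPSPACE-hard when restricted to instances $(\calN,p_0)$, $\calN=(Q,Q_\exists,\delta,p_{win})$, in which every Adam configuration of $\calG_\calN$ has at least one successor. For such an $\calN$ I would construct, in logarithmic space, a SOCN $\calN'$ whose associated LTS is isomorphic to the LTS $\calL(\calG_\calN)$ of Section~\ref{sec:reachtosimul}. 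Then, for any $\rho$ with $\bisim\subseteq\mathop{\rho}\subseteq\simul$, Proposition~\ref{prop:reachreduced} gives that $p_0(0)\in\winareaE$ in $\calG_\calN$ iff $p_0(0)\notsimul p_0'(0)$, hence iff $(p_0(0),p_0'(0))\notin\rho$, whereas $p_0(0)\notin\winareaE$ iff $p_0(0)\bisim p_0'(0)$, hence $(p_0(0),p_0'(0))\in\rho$; so deciding $\rho$ solves the complement of \rgsocn, and since EXPSPACE is closed under complement, deciding $\rho$ is EXPSPACE-hard.

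The SOCN $\calN'$ would have control states: $p$ and a copy $p'$ for each $p\in Q$; $\langle p,\forall\rangle$ for each $p\in Q_\forall$ (which, by the restriction, has an outgoing rule); and $\langle p,z,q\rangle$ for each rule $(p,z,q)\in\delta$ with $p\in Q_\forall$. Its actions are $a_c$, $a_{win}$, and one action $a_r$ for each rule $r\in\delta$ (so the infinitely many labels $a_{\langle s,\bar s\rangle}$ of Section~\ref{sec:reachtosimul} collapse, in the one-counter realization, to the label of the rule inducing the transition $s\gt{}\bar s$). Following the definition of $\calL(\calG)$ and inserting the bracketed counter-changes of Figures~\ref{fig:attchoice} and~\ref{fig:defendchoice}, I would let $\calN'$ have the rules $p\gtl{a_r,z}q$ and $p'\gtl{a_r,z}q'$ whenever $p\in Q_\exists$ and $r=(p,z,q)\in\delta$; for $p\in Q_\forall$ with outgoing rules $r_i=(p,z_i,q_i)$ ($i\in[1,k]$), the rules $p\gtl{a_c,0}\langle p,\forall\rangle$ and $\langle p,\forall\rangle\gtl{a_{r_i},z_i}q_i$, together with $p\gtl{a_c,\min\{z_i,0\}}\langle p,z_i,q_i\rangle$, $p'\gtl{a_c,\min\{z_i,0\}}\langle p,z_i,q_i\rangle$, $\langle p,z_i,q_i\rangle\gtl{a_{r_i},\max\{z_i,0\}}q_i'$, and $\langle p,z_i,q_i\rangle\gtl{a_{r_j},z_j-\min\{z_i,0\}}q_j$ for $j\neq i$; and finally $p_{win}\gtl{a_{win},0}p_{win}$. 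There are $O(\abs{Q}+\abs{\delta})$ control states and $O(\abs{\delta}^2)$ rules, and the integers occurring in the rules are $z$, $\min\{z,0\}$, $\max\{z,0\}$ and $z_j-\min\{z_i,0\}$ for integers $z,z_i,z_j$ appearing in $\delta$, all computable from their binary encodings in logarithmic space and of bit-length at most one more than the maximum in $\delta$; hence $\calN'$ is produced in logarithmic space, making the whole construction a logspace reduction.

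The point that needs genuine care -- the main obstacle -- is verifying that this one-counter gadget is faithful at ``boundary'' counter values, i.e.\ that the map $p(m)\mapsto p(m)$, $p'(m)\mapsto p(m)'$, $\langle p,\forall\rangle(m)\mapsto\langle p(m),X\rangle$ with $X=\{q(m{+}z)\mid (p,z,q)\in\delta,\ m{+}z\ge 0\}$, and $\langle p,z,q\rangle(k)\mapsto\langle p(m),q(m{+}z)\rangle$ with $m=k-\min\{z,0\}$, is a well-defined isomorphism of LTSs between $\calL_{\calN'}$ and $\calL(\calG_\calN)$. The heart of the matter is that moving the negative part $\min\{z_i,0\}$ of a decrement onto the $a_c$-rule entering the auxiliary states forces $\langle p,z_i,q_i\rangle$ to be enterable from $p(m)$ (and from $p'(m)$) exactly when $m{+}z_i\ge 0$ -- precisely when the original transition $p(m)\gt{}q_i(m{+}z_i)$ is available -- and that from $\langle p,z_i,q_i\rangle$ the action $a_{r_i}$ then leads to $q_i(m{+}z_i)'$ while each $a_{r_j}$, $j\neq i$, leads to $q_j(m{+}z_j)$, again available exactly when $m{+}z_j\ge 0$; the restriction to ocn-games in which Adam always has a move is what guarantees that $\langle p,\forall\rangle(m)$ is never the image of a nonexistent $\langle p(m),\emptyset\rangle$, so that $\calL_{\calN'}$ has no spurious states or transitions relative to $\calL(\calG_\calN)$. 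Once this isomorphism is in place, Proposition~\ref{prop:reachreduced} applies verbatim and the theorem follows.
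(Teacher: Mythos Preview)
Your proposal is correct and follows essentially the same route as the paper: reduce from \rgsocn\ by realising the LTS $\calL(\calG_\calN)$ of Section~\ref{sec:reachtosimul} as a SOCN via the $\min/\max$ splitting of counter-changes at Adam's states (exactly the content of Figures~\ref{fig:attchoice} and~\ref{fig:defendchoice}), and then invoke Proposition~\ref{prop:reachreduced}. Your explicit restriction to instances where every Adam configuration has a successor, and your indexing of the intermediate states $\langle p,z,q\rangle$ by rules rather than by control-state pairs, are minor refinements of the paper's presentation (which instead assumes at most one rule per control-state pair and leaves the dead-Adam-configuration issue implicit, relying on the fact that it does not arise in $\calN^M_{w,m}$).
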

\begin{proof}
Let us fix a language $L$ in EXPSPACE,
and a Turing machine $M$ and a
 polynomial $p$ such that $M$ decides $L$ and 
the head-position in the computation of $M$ on any $w$
never moves out of
$[0,m{-}1]$ where $m=2^{p(|w|)}$. 

Given $w$, we can construct $\calN^M_{w,m}$
as defined by~(\ref{eq:constrocngame}) 
in Section~\ref{sec:reachgames} (in logarithmic work-space).
By Prop.~\ref{prop:spaceboundedocgame} we have $w\in L(M)$ 
iff $p_0(0)\in \winareaE$ in the game $\calG$ associated with
$\calN^M_{w,m}$ (i.e., in $\calG_\calN$ defined
by~(\ref{eq:gamefornet}) for $\calN=\calN^M_{w,m}$), hence iff 
$\big(p_0(0), (p_0(0))'\big)\not\in\rho$ in the LTS $\calL(\calG)$
for any $\rho$ satisfying $\bisim\subseteq\rho\subseteq\simul$
(by Prop.~\ref{prop:reachreduced}).

Therefore we will be done once we show the following claim.

\medskip

\noindent
\emph{Claim}.
For any succinct ocn-game $\calN=(Q,Q_{\exists}, \delta, \WinE)$ 
(where ``succinct'' refers to binary presentations of $z$ in $q\gt{z}q'$)
we can construct, in logarithmic work-space, a SOCN $\calN'$ such that 
the LTS $\calL_{\calN'}$ (as defined by~(\ref{eq:ltsfornet})) is
isomorphic with $\calL(\calG_\calN)$.

\medskip

A proof of this claim is also demonstrated in Figures~\ref{fig:attchoice}
and~\ref{fig:defendchoice}, when $s_i$ are viewed as control states
and the bracketed parts of edge-labels are counter-changes (written in
binary).

First we consider 
the r-game $\calN^{csg}=(Q,Q_{\exists}, \gt{}, \{\WinE\})$ 
(``the control-state game of $\calN$'') arising from $\calN$ by
\emph{forgetting the counter-changes}; hence $q\gt{}\bar{q}$ iff 
there is a rule $q\gt{z}\bar{q}$. In fact, we will assume that 
there is at most one rule  $q\gt{z}\bar{q}$ in $\delta$ (of $\calN$) 
for any pair $(q,\bar{q})\in Q\times Q$; this can be achieved by
harmless modifications. 

We construct the LTS $\calL(\calN^{csg})$ 
(as defined by~(\ref{eq:ltstogame}) for general
r-games). Hence each $q\in Q$ has the copies $q,q'$ in
$\calL(\calN^{csg})$, and other states are added  
(as also depicted in Fig.~\ref{fig:defendchoice} where $s_i$ are
now in the role of control states); there are also the
respective labelled transitions in $\calL(\calN^{csg})$, with labels
$a_{\langle q,\bar{q}\rangle}$, $a_c$, $a_{win}$.

It remains to add the counter changes (integer increments and
decrements in binary), to create the required SOCN $\calN'$.
For $q\in Q_{\exists}$ this adding is simple, as
depicted in Fig.~\ref{fig:attchoice}: if $q\gt{z}\bar{q}$ (in $\calN$),
then we
simply extend the label $a_{\langle q,\bar{q}\rangle}$  
in $\calL(\calN^{csg})$
with $z$;
for $q\gtl{a_{\langle q,\bar{q}\rangle}}\bar{q}$ and
$q'\gtl{a_{\langle q,\bar{q}\rangle}}\bar{q}'$ in $\calL(\calN^{csg})$
we get 
 $q\gtl{a_{\langle q,\bar{q}\rangle},z}\bar{q}$
 and  $q'\gtl{a_{\langle q,\bar{q}\rangle},z}\bar{q}'$ in $\calN'$. 

 For $q\in Q_{\forall}$ (where $Q_{\forall}=Q\smallsetminus
 Q_{\exists}$) it is tempting to the same, i.e. extend
the label $a_{\langle q,\bar{q}\rangle}$ with $z$ when
$q\gt{z}\bar{q}$, and extend $a_c$ with $0$.
 But this might allow cheating for Defender: she could thus
 mimic choosing a
 transition
 $q(k)\gt{x}\bar{q}(k{+}x)$ even if $k{+}x<0$. This is avoided by the
 modification that is demonstrated in Fig.~\ref{fig:defendchoice}
 (by $x=x'{+}x''$, etc.); put simply: Defender must immediately prove
 that the transition she is choosing to mimic is indeed performable.
 Formally, if   $X=\{\bar{q}\mid q\gt{}\bar{q}\}\neq\emptyset$ 
 (in $\calL(\calN^{csg})$), then in $\calN'$ 
 we put $q\gtl{a_c,0}\langle q,X \rangle$ and
 $\langle q,X \rangle\gtl{a_{\langle q,\bar{q} \rangle},z}\bar{q}$
 for each $q\gt{z}\bar{q}$ (in $\calN$);
 for  each $q\gt{z}\bar{q}$ we also define $z'=\min\{z,0\}$,
  $z''=\max\{z,0\}$ and
 put $q'\gtl{a_c,z'}\langle q,\bar{q}\rangle$, 
 $\langle q,\bar{q}\rangle\gtl{a_{\langle
 q,\bar{q}\rangle},z''}\bar{q}'$.
 Then for any pair $q\gtl{\bar{z}}\bar{q}$, 
 $q\gtl{\bar{\bar{z}}}\bar{\bar{q}}$
 where $\bar{q}\neq \bar{\bar{q}}$ we put 
$\langle q,\bar{q}\rangle\gtl{a_{\langle
q,\bar{\bar{q}}\rangle},\bar{\bar{z}}-\bar{z}'}\bar{\bar{q}}$.

Finally, $p_{win}\gtl{a_{win}}p_{win}$ in $\calL(\calN^{csg})$
is extended to $p_{win}\gtl{a_{win},0}p_{win}$ in $\calN'$.
\end{proof}

\section{Additional remarks}~\label{sec:addrem}.

Our EXPSPACE-hardness proof of reachability games in Section~\ref{sec:reachgames}
can be easily adjusted to yield an alternative proof of
EXPTIME-hardness of countdown games
from~\cite{DBLP:journals/lmcs/JurdzinskiSL08}. 
The proof in~\cite{DBLP:journals/lmcs/JurdzinskiSL08} used a reduction 
from the
acceptance problem for alternating
linear bounded automata (a well-known EXPTIME-complete problem); a
crucial point was that a whole (linear) configuration can be
presented by an exponential number (presented in polynomial space),
and moving to a next configuration can be realized by adding (or,
 in another setting, subtracting) another exponential number.
The countdown 
games were modified to ``count-up'' games in~\cite{DBLP:journals/ipl/Kiefer13}  
(called ``hit-or-run games'' there) to show EXPTIME-hardness of
bisimilarity on BPA processes. 

In the case of countdown (or count-up) games an important ingredient
is an initial (or target) exponential counter value, as a part of the
input. In our proof such a value would correspond to an upper bound on
$t$ in the computation $C_0,C_1,\dots,C_t$ on $w$. We can imagine 
a modification of our game where 
 Eve first sets the corresponding bound-value into the counter and then
repeatedly decrements the counter before entering a configuration
$s_{(q_+,x)}(k)$. If we restrict our attention to 
Turing machines $M$ working in exponential time,
then constructing the modified game constitutes a (logspace, master) reduction
demonstrating EXPTIME-hardness.

If we only assume that $M$ works in exponential space, it can work 
in 
double-exponential time, and we cannot present such a bound  in
polynomial space (explicitly); therefore our proof does not rely on any
explicit upper bound, and it yields a modification of countdown games
where one player first sets any initial counter value and only then the
original countdown game is played. This modified game is thus
EXPSPACE-complete. (EXPSPACE-membership follows from our reduction in
Section~\ref{sec:reachtosimul} and the known results for simulation
and bisimulation relations, but also a simpler direct proof can be
given.) 

Finally we can note that our modified (EXPSPACE-complete) game does
not seem easily implementable by BPA processes, hence 
the EXPTIME-hardness result
in~\cite{DBLP:journals/ipl/Kiefer13} has not been improved here.
(The known upper bound for bisimilarity on BPA is 2-EXPTIME.)

\bibliographystyle{abbrv}
\bibliography{bibliography} 

\end{document}